\newtheorem{definition}{Definition}
\newtheorem{lemma}{Lemma}
\newtheorem{theorem}{Theorem}
\newtheorem{corollary}{Corollary}
\begin{document}

\title{On the minimum weight problem of permutation codes under Chebyshev distance}

\author{
\authorblockN{Min-Zheng Shieh}
\authorblockA{Department of Computer Science\\
National Chiao Tung University \\
1001 University Road, Hsinchu, Taiwan\\
Email: mzhsieh@csie.nctu.edu.tw}
\and
\authorblockN{Shi-Chun Tsai}
\authorblockA{Department of Computer Science\\
National Chiao Tung University \\
1001 University Road, Hsinchu, Taiwan\\
Email: sctsai@cs.nctu.edu.tw}
}
%

\maketitle

\begin{abstract}
Permutation codes of length $n$ and distance $d$ is a set of permutations on $n$ symbols, where the distance between any two elements in the set is at least $d$. Subgroup permutation codes are permutation codes with the property that the elements are closed under the operation of composition. In this paper, under the distance metric $\ell_{\infty}$-norm, we prove that finding the minimum weight codeword for subgroup permutation code is NP-complete. Moreover, we show that it is NP-hard to approximate the minimum weight within the factor $\frac{7}{6}-\epsilon$ for any $\epsilon>0$.
\end{abstract}

\section{Introduction}

Permutation codes of length $n$ are subsets of  all permutations over $\{1,\dots,n\}$. We say a permutation code $C$ has minimum distance $d$ under some metric $\delta(\cdot,\cdot)$ if for any pair of distinct permutations $\pi$ and $\rho$ in $C$, $\delta(\pi,\rho)\ge d$. Recently, permutation codes have been found to be useful in several applications in various areas such as power line communication (see \cite{Shum02}, \cite{VH00}, \cite{VHW00}, and \cite{Vinck00}), multi-level flash memories (e.g. \cite{Jiang1}, \cite{Jiang2}, and \cite{TS09}), and cryptography (see \cite{ST09}). For these applications, researchers mainly focus on creating permutation codes within certain distance $d$ under Hamming distance, Kendall's tau distance, Chebyshev distance and other metrics which are meaningful for particular applications. 

We use $S_n$ to represent all of the permutations over $\{1,\dots,n\}$. $S_n$ is also called the symmetric group in Algebra. In this paper, we focus on permutation codes, which  also form a subgroup of $S_n$. We call them subgroup codes. A subgroup code $C$ is often defined by a generator set $\{\pi_1,\dots,\pi_k\}$ and all permutations in $C$ can be written in a sequence of compositions of elements in the generator set. This is  similar to linear codes which are subspaces of $\mathbb{F}^n$ for some finite field $\mathbb{F}$ and positive integer $n$, and lattices which are subgroups of $\mathbb{R}^n$ under the vector addition for some positive integer $n$. 

It is natural to ask how to determine the minimum distance of a code and to compute the closest codeword for a certain received string. Both problems have analogous versions for linear codes and lattices. For a linear code, it is to determine the  minimum distance while given the generator matrix of the code. This problem under Hamming distance has been proved to be NP-complete by Vardy \cite{Var97}. The analogous problem of the latter for linear codes under Hamming distance is also NP-hard by Arora et al \cite{ABSS97}. The analogous problems for lattices are the shortest lattice vector problem (SVP) and closest vector problem (CVP). SVP under $\ell_p$-norm is NP-hard, even for approximating within $p^{1-\epsilon}$ for any $\epsilon>0$ \cite{Khot05}. SVP under Chebyshev distance is also NP-hard, even for approximating within $n^{1/\log\log n}$ factor for any $\epsilon>0$ \cite{Din03}. For the subgroup permutation code version, both problems are proved to be NP-complete under many metrics, such as Hamming distance, $\ell_p$-norm, Kendall's tau, etc \cite{CW10,BCW09}. 
However, for Chebyshev distance ($\ell_{\infty}$-norm), the NP-completeness proof by Cameron and Wu \cite{CW10} fell apart
on some instances.

For right-invariant metrics, the minimum distance of subgroup permutation codes is equivalent to finding the minimum weight permutation $\pi$, where the weight of $\pi$ is defined as the distance between $\pi$ and the identity. In this paper, we focus on the complexity of the minimum weight problem for the subgroup permutation codes. We give a correct reduction to prove the NP-hardness of this problem. Moreover, we show that it is NP-hard to approximate within $\frac{7}{6}-\epsilon$ for any $\epsilon>0$. 
Our result suggests that there does not exist an efficient method which can decide  the minimum distance of an arbitrary subgroup permutation code. For example, in Tamo and Schwartz's work \cite{TS09}, they constructed some subgroup permutation codes having a minimum distance larger than they proved, but they could not give the  minimum distance explicitly with an efficient method. However, there are still some permutation codes coming with predetermined minimum distance, efficient encoding and decoding algorithms, such as in \cite{LTT08}, \cite{KLTT10} and \cite{ST09}. The situation of subgroup permutation codes is just similar to linear codes.
The rest of the paper is organized as follows.  We define some notations in Section \ref{prelim}.   The reduction is given in
Section \ref{reduc}.  Finally Section \ref{concl} concludes the paper.

\section{Preliminary}\label{prelim}

We use $[n]$ to indicate the set $\{1,\ldots,n\}$. A permutation $\pi$ over $[n]$ is a bijective function from $[n]$ to $[n]$. 
There are several representations for a permutation. In this paper, we use a truth table to denote a permutation $\pi=[\pi(1),\dots,\pi(n)]$, which can be written as the product of cycles. A cycle $(p_0,\dots,p_{k-1})$ represents a permutation 
putting the $p_i$-th entry of the input to the $p_{i+1}$-th entry of the output for $i\in\mathbb{Z}_k$. Any permutation can be written in the form of product of disjoint cycles. For example, $\pi=[2,3,1,4,6,5]=(1,3,2)(4)(5,6)$. Usually, we  ignore the cycles with only one element, therefore $[2,3,1,4,6,5]=(1,3,2)(5,6)$. 

Let $S_{n}$ denote the set of all permutations over $[n]$. It is well known that $S_n$ is a group with the composition operation. We define the product of permutations $f$ and $g\in S_n$ as $fg=[f(g(1)),\dots,f(g(n))]$. The identity permutation in $S_{n}$ is $e=[1,\dots,n]$. We say that $\{\pi_1,\dots,\pi_k\}$ is a generator set for a subgroup $H\subseteq S_n$, if every permutation $\pi\in H$ can be written as a product of a sequence of compositions from elements in the generator set.
For two permutations $\pi$ and $\rho$ over $[n]$, their Chebyshev distance is defined as $\ell_\infty(\pi,\rho)=\max_{i\in[n]}|\pi(i) -\rho(i)|$. Note $\ell_\infty$ is a right-invariant metric, i.e., for permutations $\pi$, $\rho$, and $\tau$, we have $\ell_\infty(\pi,\rho)=\ell_\infty(\pi\tau,\rho\tau)$. 

We say that a permutation $\pi$ has weight $w$ under right-invariant metric $\delta$ if $\delta(e,\pi)=w$. Now we define the minimum weight  problem of subgroup permutation  code under  Chebyshev metric, and we call it MINWSPA for short. 
\begin{definition}(MINWSPA)
Given a generator set $\{g_1,\dots,g_k\}$ for a subgroup $H$ of $S_n$ and an integer $B$, determine if there exists a permutation $\pi\in H$ that has a non-zero weight $w\le B$.
\end{definition}

Klein four-group is the building block of our proofs. It is defined as $K_4=\{e,\kappa_1,\kappa_2,\kappa_3\}$, where $\kappa_1=(1,2)(3,4)$, $\kappa_2=(1,3)(2,4)$, and $\kappa_3=(1,4)(2,3)$. 
\begin{table}[htdp]
\caption{Operation of Klein four-group.}
\begin{center}
$\begin{array}{|c||c|c|c|c|}\hline
\circ & e &\kappa_1&\kappa_2&\kappa_3\\\hline\hline
e & e&\kappa_1&\kappa_2&\kappa_3\\\hline
\kappa_1&\kappa_1&e&\kappa_3&\kappa_2\\\hline
\kappa_2&\kappa_2&\kappa_3&e&\kappa_1\\\hline
\kappa_3&\kappa_3&\kappa_2&\kappa_1&e\\\hline
\end{array}$
\end{center}
\label{group}
\end{table}
Its operation is shown in Table \ref{group}.
It is clear that $K_4$ is commutative and $\ell_\infty(e,\kappa_i)=i$ for $i\in\{1,2,3\}$.
We also use {\em shift} and {\em stretch} operations for constructing permutations. They may involve some elements of large indices. We assume that these operations are only applied on permutations over a sufficiently large symbol set. Shifting a cycle $(p_1,\dots,p_k)$ is to add the same number
to each entry of it. For example, if we shift $(1,2,3)$ with $5$, then we get $(6,7,8)$. We denote the shift operation as $s_r(\pi)$, which shifts all cycles in $\pi$ with the number $r$. For example $s_4(\kappa_1)=(1+4,2+4)(3+4,4+4)=(5,6)(7,8)$. This operation does not change the weight since the distance is preserved. 

Stretching a cycle $(p_1,\dots,p_k)$ is to multiply each entry by the same number. For example, if we stretch $(1,2,3)$ by $2$ then we have $(2,4,6)$.  We denote the stretch operation as $a_t(\pi)$ which stretches all cycles in $\pi$ by the number $t$ and then shifts the cycles such that the smallest  symbol is down to $1$. 
The distance is amplified $t$ times, and so is the weight of the cycles.
 For example $a_2(\kappa_1)=s_{-1}((1\cdot2,2\cdot2)(3\cdot2,4\cdot2))=(1,3)(5,7)$, and similarly $a_2(\kappa_2)=(1,5)(3,7), a_2(\kappa_3)=(1,7)(3,5)$. This operation amplifies the weight 2 times.
Observe that if $\{i, j, k\}=\{1, 2, 3\}$, then  $s_r(a_t(\kappa_i))s_r(a_t(\kappa_j))=s_r(a_t(\kappa_k))$ and $s_r(a_t(\kappa_i))s_r(a_t(\kappa_i))=s_r(a_t(e))$.  I.e, the shift and stretch operations  preserve the property of Klein four-group.

\section{Reduction}\label{reduc}

In this section we give a reduction from Not-All-Equal-SAT (NAESAT) to MINWSPA. Cameron and Wu\cite{CW10} gave a proof by a reduction from NAESAT to MINWSPA, but their construction fell apart on  $\ell_\infty$-norm for some instance, which is shown in Appendix A.
We give the formal definition of Not-All-Equal-SAT problem as follows.
\begin{definition}(NAESAT)
Given a boolean formula $\phi$ in conjunctive normal form, which consists of $m$ exact-3-literal clauses $c_1,\dots,c_m$ of over $n$ variables $x_1,\dots,x_n$, decide whether there exists an assignment $\sigma$ such that for every clause $c$, not all literals in $c$ are assigned to the same truth value.
\end{definition}

To construct the corresponding generator set from an NAESAT instance $\phi$
, we define three kinds of permutation gadgets for the clauses, variables and the truth assignment over $[48m+18n]$. Our goal is mapping truth assignments  for $\phi$ to permutation codewords in the corresponding subgroup permutation code. Moreover, the codewords converted from satisfying assignments have less weight than the other codewords, except
the identity. Hence, we can determine whether $\phi$ is satisfiable from the minimum weight of the corresponding subgroup permutation code.

The clause gadgets permute $1,\dots,48m$, which are derived from the work by Cameron and Wu\cite{CW10}.
The main idea of the clause gadget is to assure that all literals are not assigned to the same value. 
For convenience, we also express the following permutations with the shift and stretch operations. Let

\begin{tabbing}
$h_1$\=$=$\=$(1,3)(5,7)(2,4)(6,8)(9,13)(11,15)(10,14)(12,16)$\\
\>\>$(17,23)(19,21)(18,24)(20,22)$\\
\>$=$\>$a_2(\kappa_1)s_{1}(a_2(\kappa_1))s_{8}(a_2(\kappa_2))s_9(a_2(\kappa_2))$\\
\>\>$s_{16}(a_2(\kappa_3))s_{17}(a_2(\kappa_3))$,\\
$h_2$\>$=$\>$(1,5)(3,7)(2,6)(4,8)(9,15)(11,13)(10,16)(12,14)$
\\\>\>$(17,19)(21,23)(18,20)(22,24)$\\
\>$=$\>$a_2(\kappa_2)s_{1}(a_2(\kappa_2))s_{8}(a_2(\kappa_3))s_9(a_2(\kappa_3))$\\
\>\>$s_{16}(a_2(\kappa_1))s_{17}(a_2(\kappa_1))$,\\
$h_3$\>$=$\>$(1,7)(3,5)(2,8)(4,6)(9,11)(13,15)(10,12)(14,16)$\\
\>\>$(17,21)(19,23)(18,22)(20,24)$\\
\>$=$\>$a_2(\kappa_3)s_{1}(a_2(\kappa_3))s_{8}(a_2(\kappa_1))s_9(a_2(\kappa_1))$\\
\>\>$s_{16}(a_2(\kappa_2))s_{17}(a_2(\kappa_2))$,\\
$g$\>\=$=$\=$(1,8)(2,7)(3,6)(4,5)(9,16)(10,15)(11,14)(12,13)$\\
\>\>\>$(17,24)(18,23)(19,22)(20,21)$\\
\>$=$$\left(\prod_{i=1}^{12}(2i-1,2i)\right)a_2(\kappa_3)s_{1}(a_2(\kappa_3))$\\
\>\>\>$s_{8}(a_2(\kappa_3))s_9(a_2(\kappa_3))s_{16}(a_2(\kappa_3))s_{17}(a_2(\kappa_3))$.
\end{tabbing}

Note that, for each of the above permutations, the first 4 pairs permute 1-8, 
the next 4 pairs permute 9-16, and the
last 4 pairs permute 17-24. The operations among these 4 permutations are commutative. 
It is clear that $h_1$, $h_2$ and $h_3$ each has weight 6, and the weight of $g$ is $7$.
The weights of $gh_1$, $gh_2$ and $gh_3$ are all 5.

The clause gadgets corresponding to the $k$-th literal of the $j$-th clause assigned true and false are defined as $h_{j,k,T}=s_{48(j-1)}(gh_k)$ and $h_{j,k,F}=s_{48(j-1)+24}(gh_k)$, respectively. For every $j\in[m]$ and $t\in\{T,F\}$, we have:
\begin{itemize}
\item For every $k\in[3]$, $h_{j,k,t}$ has weight $5$.
\item For distinct $k,k'\in[3]$, $h_{j,k,t}h_{j,k',t}$ has weight $6$.
\item $h_{j,1,t}h_{j,2,t}h_{j,3,t}$ has weight $7$.
\end{itemize}

The second kind is the variable gadget which assures that no variable is assigned both true and false. They permute 
elements $48m+1,\dots,48m+10n$. Let
\[v_T=(1,4)(7,10)=a_3(\kappa_1), v_F=(1,7)(4,10)=a_3(\kappa_2).\]
Note that $v_Tv_F=a_3(\kappa_3)$.
The weights of $v_T$, $v_F$, and $v_Tv_F$ are 3, 6, and 9, respectively.
The variable gadgets corresponding to  $x_i$ 
are defined as 
$v_{i,T}=s_{b_1+10i}(v_T)$ and $v_{i,F}=s_{b_1+10i}(v_F)$ where $b_1=48m-10$.

The third kind is the assignment gadget which assures if $x_i$ is assigned, then $x_{i+1}$ and $x_{i-1}$ are assigned, too, where $i\pm 1 \in \mathbb{Z}_n$ and $x_0\equiv x_n$. They permute elements $48m+10n+1,\dots,48m+18n$.
We use the following permutation to give a chain reaction, i.e., if there is any missing gadget, then the distance will
deviate significantly. The assignment gadget for $x_i$ is defined as $u_i=s_{b_2+8(i-1)}((1,8))s_{b_2+8i}((1,8))$ for $i<n$ and $u_n=s_{b_2+8n}((1,8))s_{b_2}((1,8))$ where $b_2=48m+10n$. For convenience, we also use $u_0$ as the alias of the gadget $u_n$.

Now we give the polynomial-time mapping function from NAESAT to MINWSPA. Let \[\begin{array}{c}P=\{(i,j,k):x_i\mbox{ is the }k\mbox{-th literal in }c_j\},\\
Q=\{(i,j,k):\bar{x}_i\mbox{ is the }k\mbox{-th literal in }c_j\}.\end{array}\] For the $i$-th variable $x_i$, we define
\[g_i=v_{i,T}u_i\left(\prod_{(i,j,k)\in P}h_{j,k,T}\right)\left(\prod_{(i,j,k)\in Q}h_{j,k,F}\right),\]
\[g'_i=v_{i,F}u_i\left(\prod_{(i,j,k)\in P}h_{j,k,F}\right)\left(\prod_{(i,j,k)\in Q}h_{j,k,T}\right).\]
The generator set is $\{g_i,g'_i:i\in[n]\}$. The scheme above can be done in polynomial time, since $|P|+|Q|=3m$ and the size of each gadget is at most $O((48m+18n)\log(48m+18n))$.

Let $H$ be the subgroup generated by $\{g_i, g'_i:i\in[n]\}$, i.e., $H=\langle g_i, g'_i: i\in[n]\rangle$. We can obtain 
a permutation \[\pi=\left(\prod_{\sigma(x_i)=T}g_i\right)\left(\prod_{\sigma(x_i)=F}g'_i\right)\in H\] from an assignment $\sigma$ for $\phi$. By the following two lemmas, we show that there must exist a non-identity permutation of minimum weight which is constructed from an assignment.

\begin{lemma}\label{assignment}
The permutations mapped from satisfying assignments have weight 6 and
the permutations mapped from unsatisfying assignments have weight $7$. 
\end{lemma}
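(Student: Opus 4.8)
The plan is to exploit the fact that the $48m+18n$ symbols split into three disjoint intervals---the clause block $\{1,\dots,48m\}$, the variable block $\{48m+1,\dots,48m+10n\}$, and the assignment block $\{48m+10n+1,\dots,48m+18n\}$---each permuted independently by $\pi$. Since $\ell_\infty(e,\pi)=\max_i|\pi(i)-i|$ is a maximum over coordinates, the weight of $\pi$ equals the maximum of the three weights obtained by restricting $\pi$ to each block. So first I would argue that every factor $v_{i,\cdot}$ lives in the variable block, every $u_i$ in the assignment block, and every $h_{j,k,\cdot}$ in the clause block, and that factors with disjoint support commute; together with the already-noted commutativity of the clause gadgets within a single $24$-symbol sub-block, this lets me collect the defining product for $\pi$ into three independent sub-products and analyze each block separately.

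For the assignment block, the point is that a complete assignment $\sigma$ uses each variable exactly once, so exactly one of $g_i,g'_i$ is chosen for every $i$, and in either case it contributes the same factor $u_i$. I would show that $\prod_{i\in[n]}u_i$ restricts to the identity: writing $u_i$ as a product of two transpositions, one shared with $u_{i-1}$ and one shared with $u_{i+1}$ along the cyclic chain $x_0\equiv x_n$, each such transposition occurs in exactly two consecutive $u_i$'s, hence an even number of times overall, so all of them cancel. Thus the assignment block contributes weight $0$.

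For the variable block, exactly one of $v_{i,T}$ or $v_{i,F}$ appears for each $i$, and these lie on pairwise-disjoint $10$-symbol sub-blocks; since $v_T$ has weight $3$ and $v_F$ has weight $6$, the variable block contributes $\max_i \mathrm{wt}(v_{i,\sigma(x_i)})$, which is $6$ if some variable is set false and $3$ otherwise---in every case at most $6$. For the clause block, I would track, for each clause $c_j$ and each literal position $k$, which gadget is selected: from the definitions of $P$, $Q$, $g_i$, and $g'_i$ one checks that the contributed factor is $h_{j,k,T}$ precisely when the $k$-th literal of $c_j$ evaluates to true under $\sigma$, and $h_{j,k,F}$ otherwise. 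The true-literal gadgets all land in the ``true'' sub-block $\{48(j-1)+1,\dots,48(j-1)+24\}$ and the false-literal gadgets in the disjoint ``false'' sub-block. Invoking the stated weights of products of $h_{j,k,t}$ of the same type (weights $0,5,6,7$ for $0,1,2,3$ factors), a clause with $a$ true literals contributes the maximum of the weights of an $a$-fold and a $(3-a)$-fold product, which equals $6$ when $a\in\{1,2\}$ and $7$ when $a\in\{0,3\}$; that is, exactly $6$ when the clause is not-all-equal satisfied and $7$ when it is violated.

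Finally I would combine the three blocks. The clause block contributes $6$ if $\sigma$ not-all-equal satisfies every clause and $7$ if some clause has all literals equal. Taking the maximum with the variable block (at most $6$) and the assignment block ($0$), a satisfying assignment yields weight $\max\{6,\le 6,0\}=6$ and an unsatisfying assignment yields weight $\max\{7,\le 6,0\}=7$, as claimed. I expect the main obstacle to be the bookkeeping for the clause block: verifying carefully that the polarity of each literal together with $\sigma$ routes the contributed gadget to the correct true/false sub-block, so that each per-clause weight is governed exactly by the not-all-equal condition and no spurious interaction between the two sub-blocks can occur. It is also worth double-checking that the variable block, which can reach weight $6$, never exceeds it, since otherwise it could mask the $6$-versus-$7$ distinction coming from the clause block.
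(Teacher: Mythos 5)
Your proposal is correct and follows essentially the same route as the paper's proof: decompose the symbols into the three disjoint gadget blocks, show the assignment gadgets cancel to the identity, bound the variable block by $6$, and reduce the clause block to the case analysis on how many of the three selected gadgets land in the ``true'' versus ``false'' sub-block, which yields $6$ exactly when the clause is not-all-equal satisfied and $7$ otherwise. The final combination by taking the maximum over blocks matches the paper's conclusion that the clause-gadget contribution dominates.
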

\begin{proof}
Let $\pi$ be the permutation obtained from an assignment $\sigma$ of $\phi$.
Note that the elements permuted by the clause gadgets, variable gadgets, and assignment gadgets are disjoint.
Therefore we can discuss the weight of them separately in three categories. First, we look at the elements permuted by variable gadgets $v_{i,T}$ and $v_{i,F}$ for $i\in[n]$. Only $g_i$ and $g'_i$ can alter these elements, and $\pi$ has exactly one of them. Thus, the difference between $\pi$ and $e$ on these elements is at most $6$. Next we turn to assignment gadget $u_i$ for some $i\in[n]$. Without loss of generality, we assume $u_i=(n_1,n_1+7)(n_2,n_2+7)$, $u_{i-1}$ contains $(n_1,n_1+7)$ and $u_{i+1}$ contains $(n_2,n_2+7)$. Since $\pi$ has exactly one of $g_i$ and $g'_i$ for every $i\in[n]$, both $(n_1,n_1+7)$ and $(n_2,n_2+7)$ appear twice in the construction of $\pi$. Moreover, they are the only cycles covers $n_1$, $n_1+7$, $n_2$, and $n_2+7$. Thus, $\pi$ does not affect these elements and there is no difference between $\pi$ and $e$ on them.

At last, we observe clause gadgets $h_{j,k,T}$ and $h_{j,k,F}$. We claim that for $j\in[m]$ and $k\in[3]$, exactly one of $h_{j,k,T}$ and $h_{j,k,F}$ appears in $\pi$. Assume $x_i$ is the $k$-th literal of the $j$-th clause. If $\sigma(x_i)=T$, then $h_{j,k,T}$ is picked by the definition of $g_i$, otherwise $\pi$ picks $h_{j,k,F}$. It is similar for the case that $\bar{x}_i$ is the $k$-th literal in the $j$-th clause. Thus, for every $j\in[m]$, $\pi$ picks exactly three out of $h_{j,1,T}$, $h_{j,2,T}$, $h_{j,3,T}$, $h_{j,1,F}$, $h_{j,2,F}$, and $h_{j,3,F}$. Let $A_j=\{h_{j,k,T}:k\in[3]\}$ and $B_j=\{h_{j,k,F}:k\in[3]\}$. In the following, we discuss how these gadgets affect the distance between $\pi$ and $e$.
\begin{enumerate}
\item If the gadgets in $A_j$ are not picked at all, then the elements permuted by $A_j$ remain the same as $e$. But this implies all gadget in $B_j$ are picked, then the elements permuted by $B_j$ are permuted with a shift of $g$. The distance is $\max\{0,7\}=7$.
\item If one of $A_j$ and two of $B_j$ are picked, then the elements permuted by $A_j$ and $B_j$ are in the form of $gh_{k}$ and $h_{k'}$ for some $k,k'\in[3]$, respectively. The distance is $\max\{5,6\}=6$. 
\item If two of $A_j$ and one of $B_j$ are picked, then, similar to 2), the distance is 6.
\item If all of $A_j$ are picked, then,  similar to 1), the distance is $7$.
\end{enumerate}
Note that the first or last cases above happen if and only if $\sigma$ is not a satisfying assignment. Since distances of the clause gadgets dominate the distance over the other gadgets, we conclude that $\pi$ has weight 6 if $\sigma$ is satisfying;  $7$ otherwise. 
\end{proof}

\begin{lemma}\label{nonassignment}
The other non-identity permutations in $H$ have weight at least $7$. 
\end{lemma}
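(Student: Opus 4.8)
The plan is to show that any non-identity permutation $\pi \in H$ that does \emph{not} arise from a global truth assignment must have weight at least $7$. Every element of $H$ is a product of generators $g_i$ and $g'_i$; since $H$ is abelian on each block of coordinates (the clause, variable, and assignment gadgets act on disjoint symbol sets and commute within each block by the Klein four-group property), I can record for each $i$ the net effect of how many times $g_i$ and $g'_i$ appear, reduced modulo the relevant orders. The key reduction is that, up to the group relations, each generator pair contributes to three independent ``ledgers'': the variable gadget for $x_i$, the assignment gadget $u_i$ (a pair of transpositions shared with $u_{i-1}$ and $u_{i+1}$), and the clause gadgets for the clauses containing $x_i$. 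The first observation is that a permutation coming from a legitimate assignment is exactly the case where, for each $i$, precisely one of $g_i, g'_i$ is selected; the present lemma must rule out every other selection pattern.

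The main case analysis I would carry out is driven by the assignment gadgets $u_i$, because they were engineered to create a ``chain reaction.'' First I would argue that if the multiset of chosen generators is \emph{not} of the form ``exactly one of $\{g_i,g'_i\}$ for every $i$,'' then either (a) some $u_i$ is unbalanced, i.e.\ one of its shared transpositions $(n,n+7)$ appears an odd number of times while its partner appears an even number, or (b) every $u_i$ is balanced but the counts on the variable/clause gadgets still fail to cancel. For case (a), an unbalanced transposition $(n,n+7)$ survives in $\pi$ and directly contributes a coordinate displacement of $7$ (since $a_t$-style stretching is not applied to these, the transposition moves a symbol by exactly $7$), immediately giving weight $\ge 7$. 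The force of the chain structure is that because consecutive $u_i$ share transpositions cyclically, balancing \emph{all} of them forces the number of times $g_i$-or-$g'_i$ is picked to have the same parity for all $i$ simultaneously; so either every index is ``hit'' an odd number of times (the assignment-like case) or every index an even number of times.

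For case (b), where every index is hit an even number of times, I would show the variable gadget forces weight $\ge 7$ unless $\pi = e$. With an even hit-count, the possibilities per variable (working modulo the order-$3$ structure generated by $v_T, v_F$ with $v_Tv_F = a_3(\kappa_3)$) are that the net variable-gadget contribution is $e$, $v_T$, $v_F$, or $v_Tv_F = a_3(\kappa_3)$. If \emph{any} variable contributes a non-identity element, its weight is at least $6$ from $v_F$ or $9$ from $a_3(\kappa_3)$; and I would need to check that a contribution of exactly the weight-$3$ element $v_T = a_3(\kappa_1)$ alone cannot occur without simultaneously disturbing a clause or assignment gadget enough to push the total to $7$. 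The cleanest route is to observe that selecting $g_i$ (or $g'_i$) a net nonzero-but-even number of times necessarily also perturbs the clause gadgets $h_{j,k,\cdot}$ for clauses containing $x_i$, and to tally those: since each clause-block has a baseline, an even-but-nonzero perturbation there lands on a $g$-containing configuration of weight $7$ rather than the weight-$5$ or weight-$6$ satisfying configurations. I expect the \textbf{main obstacle} to be this interaction bookkeeping in case (b): ruling out the ``sneaky'' low-weight combinations where small contributions from variable gadgets (weight $3$) and from clause gadgets might individually seem below $7$, and proving rigorously that whenever the selection is not a genuine assignment, at least one block is driven to weight $7$. The remaining cases then follow by the same domination argument used in Lemma~\ref{assignment}, namely that the clause and assignment blocks can each independently certify weight $\ge 7$.
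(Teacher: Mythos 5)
Your overall strategy is the same as the paper's: write $\pi=\prod_i g_i^{z_i}(g'_i)^{z'_i}$ with $z_i,z'_i\in\{0,1\}$ (every gadget is an involution and all gadgets commute), then use the cyclic chain of assignment gadgets to force weight $7$ when the counts $c_i=z_i+z'_i$ have mixed parity, and the variable gadget to force a large weight when some $c_i=2$. Your case (a) is exactly the paper's second case: a shared transposition $(r+1,r+8)$ is touched only by $u_{i-1}$ and $u_i$, so if it appears an odd number of times it survives and displaces a symbol by exactly $7$; the paper phrases this contrapositively by locating adjacent indices $i_0=i_1-1$ with $c_{i_0}=0$ and $c_{i_1}=1$.

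However, the ``main obstacle'' you flag in case (b) is a phantom, and the workaround you sketch for it is actually false. Once the exponents are reduced mod $2$, an even hit-count $c_i$ forces $(z_i,z'_i)$ to be $(0,0)$ or $(1,1)$, so the net variable-gadget contribution is either $e$ or $v_Tv_F$ of weight $9$; the weight-$3$ contribution $v_T$ alone (and $v_F$ alone) can only arise from an odd hit-count, which belongs to the assignment case or to your case (a). Hence no clause-gadget bookkeeping is needed: if all $c_i$ are even and some $c_i=2$, the variable gadget alone certifies weight $9\ge 7$, and if all $c_i=0$ then $\pi=e$. Moreover, the bookkeeping you propose would not succeed if you tried to carry it out: when both $g_i$ and $g'_i$ are picked, a clause $c_j$ containing $x_i$ receives both $h_{j,k,T}$ and $h_{j,k,F}$, but these live on \emph{disjoint} $24$-symbol blocks (the $F$-copy is shifted by an extra $24$), so each block can land in a weight-$5$ configuration $gh_k$ rather than a weight-$7$ one. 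Delete that detour, state the mod-$2$ reduction explicitly, and your argument coincides with the paper's proof.
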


\begin{proof}
Since all gadgets are commutative, we can express any permutation $\pi\in H$ into a product of powers of generators, i.e., $\pi=g_1^{z_1}(g'_1)^{z'_1}\cdots g_n^{z_n}(g'_n)^{z'_n}$. Since every gadget is the inverse of itself, we assume $z_1,z'_1,\dots,z_n,z'_n\in\{0,1\}$ without loss of generality. A permutation converted from an assignment must choose either $g_i$ or $g'_i$, for every $i\in[n]$, i.e., $z_i+z'_i=1$ for $i\in[n]$. So we discuss the following two cases.
\begin{itemize}
\item If there exists some $i$ such that $z_i+z'_i=2$, then $\pi$ picks both $g_i$ and $g'_i$. In this case, the elements corresponding to $v_{i,T}$ and $v_{i,F}$ are permuted into the form of $v_Tv_F$, which has weight $9$.
\item For every $i\in[n]$, $z_i+z'_i\neq 2$. Because $\pi\neq e$ and $\pi$ is not converted from an assignment, there are $i_0$ and $i_1$ such that $z_{i_0}+z'_{i_0}=0$ and $z_{i_1}+z'_{i_1}=1$. 
Now recall that for $i<n$, $u_i=s_{b_2+8(i-1)}((1,8))s_{b_2+8i}((1,8))$ and $u_n=s_{b_2+8n}((1,8))s_{b_2}((1,8))$. Without loss of generality, we can assume that $i_0=i_1-1$. As a consequence, $u_{i_0}$ and $u_{i_1}$ are the only two gadgets permuting $b_2+8i_1-7$ and $b_2+8i_1+7$. Since $z_{i_0}+z'_{i_0}+z_{i_1}+z'_{i_1}=0+1=1$, $\pi$ picks exactly one of $g_{i_0}$, $g'_{i_0}$, $g_{i_1}$ and $g'_{i_1}$. $b_2+8i_1-7$ and $b_2+8i_1$ must be swapped by $\pi$, hence $\pi$ has distance at least $7$ in this case.
\end{itemize}
The non-identity permutations, which are not in the two cases, have exactly one of $g_i$ and $g'_i$ for every $i\in[n]$, and these can be obtained from assignments. We conclude the lemma is true.
\end{proof}

With the two lemmas above, we prove the following theorem.

\begin{theorem}\label{main}
Let $H$ be the group generated by $\{g_1,g'_1\dots,g_n,g'_n\}$ which is mapped from a NAESAT instance $\phi$. If $\phi$ is satisfiable then $H$ has minimum weight $6$, otherwise $H$ has minimum weight $7$. 
\end{theorem}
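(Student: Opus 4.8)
The plan is to derive the theorem as an immediate consequence of Lemma~\ref{assignment} and Lemma~\ref{nonassignment} by splitting the non-identity permutations of $H$ into two exhaustive families and then branching on whether $\phi$ is satisfiable. First I would observe that every non-identity $\pi\in H$ either arises from some assignment $\sigma$ of $\phi$ via the map $\pi=\left(\prod_{\sigma(x_i)=T}g_i\right)\left(\prod_{\sigma(x_i)=F}g'_i\right)$, or it does not; these two possibilities are mutually exclusive and together cover all of $H\setminus\{e\}$, so the minimum weight of $H$ is the smaller of the minimum weights achieved within each family.

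Next I would assemble the weight information supplied by the two lemmas. By Lemma~\ref{assignment}, the assignment-derived permutations have weight exactly $6$ when $\sigma$ satisfies $\phi$ and weight exactly $7$ otherwise. By Lemma~\ref{nonassignment}, every non-identity permutation that does \emph{not} arise from an assignment has weight at least $7$. Combining these, weight $6$ is attainable among the non-identity permutations of $H$ precisely by those coming from satisfying assignments, while every remaining non-identity permutation has weight at least $7$.

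The conclusion then follows by case analysis on the satisfiability of $\phi$. If $\phi$ is satisfiable, I pick a satisfying assignment $\sigma$; the corresponding $\pi$ is non-identity and has weight $6$ by Lemma~\ref{assignment}, while the preceding paragraph shows that no non-identity permutation has weight below $6$, so the minimum weight of $H$ is exactly $6$. If $\phi$ is unsatisfiable, then no satisfying assignment exists, so every assignment-derived permutation has weight $7$ and every other non-identity permutation has weight at least $7$; since any fixed assignment still yields a non-identity permutation of weight exactly $7$, the minimum weight is exactly $7$.

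I do not expect a genuine obstacle, since the theorem is essentially a bookkeeping corollary of the two lemmas. The only points requiring care are confirming that the two families really do exhaust $H\setminus\{e\}$, so that nothing is overlooked when taking the minimum, and verifying that each branch exhibits an actual witness of the claimed weight—a non-identity permutation of weight $6$ in the satisfiable case and one of weight $7$ in the unsatisfiable case—rather than merely establishing a lower bound.
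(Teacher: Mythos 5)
Your proposal is correct and follows essentially the same route as the paper: combine Lemma~\ref{assignment} for assignment-derived permutations with Lemma~\ref{nonassignment} for the rest, then branch on satisfiability. Your version is slightly more explicit than the paper's in the satisfiable case (the paper exhibits a weight-$6$ witness but leaves the matching lower bound of $6$ implicit in the two lemmas), but the substance is identical.
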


\begin{proof}
For satisfiable $\phi$, there exists a satisfying assignment $\sigma$. We can convert $\sigma$ into $\pi$, and $\pi$ has weight 6 by lemma \ref{assignment}. For unsatisfiable $\phi$, all assignments are not satisfying. So every permutation converted from an assignment has weight $7$, and the other non-identity permutations have weight at least $7$. Thus, we conclude $H$ has minimum weight $7$. 
\end{proof}
From the above theorem we have an immediate inapproximable result.
We say that an algorithm $A$ is an $r$-approximate algorithm for a minimization problem if $A$ always outputs a feasible solution whose cost is no more than $r$ times of the minimum cost on any input. Note that $A$ cannot output an answer whose cost is less than the minimum cost, since it is not a feasible solution. Since NAESAT is an NP-complete problem, we have the following corollary as an immediate result of theorem \ref{main}.

\begin{corollary}
MINWSPA is NP-complete. Moreover, it is NP-hard to approximate within $\frac{7}{6}-\epsilon$ for any $\epsilon>0$.
\end{corollary}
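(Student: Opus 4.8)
The plan is to derive the corollary as an essentially immediate consequence of Theorem~\ref{main}, which already establishes the crucial gap between the satisfiable case (minimum weight $6$) and the unsatisfiable case (minimum weight $7$). First I would settle NP-completeness. Membership in NP is straightforward: a witness is a sequence of generators whose product is a non-identity permutation of weight at most $B$; since $\ell_\infty$-weight is computable in polynomial time and the product of the witnessed generators can be evaluated in polynomial time, the certificate is verifiable efficiently. For NP-hardness I would invoke the reduction constructed in Section~\ref{reduc}: given a NAESAT instance $\phi$, we build in polynomial time (as argued after the definition of the generators) the generator set $\{g_i,g'_i:i\in[n]\}$ for the subgroup $H\subseteq S_{48m+18n}$, and we set the bound $B=6$. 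By Theorem~\ref{main}, $\phi$ is satisfiable if and only if $H$ contains a non-identity permutation of weight at most $6$, which is exactly the MINWSPA instance $(\{g_i,g'_i\},B=6)$ answering ``yes''. Since NAESAT is NP-complete, MINWSPA is NP-hard, and together with NP membership this gives NP-completeness.

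Next I would establish the inapproximability factor. The key observation is that the same reduction produces an exact gap: a satisfiable $\phi$ yields minimum weight exactly $6$, while an unsatisfiable $\phi$ yields minimum weight exactly $7$. Suppose, for contradiction, that there were a polynomial-time $r$-approximate algorithm $A$ for MINWSPA with $r=\tfrac{7}{6}-\epsilon$ for some fixed $\epsilon>0$. I would run $A$ on the instance produced from $\phi$ and argue that its output value alone decides satisfiability. In the satisfiable case the true optimum is $6$, so $A$ must return a feasible value at most $r\cdot 6=7-6\epsilon<7$; since all weights are integers, this means $A$ returns a value of $6$. In the unsatisfiable case the true optimum is $7$, and because $A$ can never return a value below the optimum (its output must be the weight of an actual feasible codeword), $A$ returns at least $7$. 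Thus the threshold test ``is the output $<7$?'' distinguishes satisfiable from unsatisfiable instances, yielding a polynomial-time decision procedure for NAESAT and hence $\mathrm{P}=\mathrm{NP}$.

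I expect the only subtle point---the main obstacle, though a mild one---to be the justification that the approximation algorithm's output is forced to the integer value $6$ rather than merely lying in the interval $[6,7-6\epsilon)$. This relies on two facts that must be stated explicitly: first, that every feasible solution has integer weight (true because $\ell_\infty$-weight of a permutation is a maximum of absolute differences of integers); and second, that a feasible solution's reported cost cannot undershoot the true minimum, which is precisely the feasibility condition emphasized in the remark preceding the corollary. Granting these, the integrality squeezes the admissible output in the satisfiable case down to exactly $6$, and the strict inequality $7-6\epsilon<7$ guarantees the two cases are separated by the integer $7$. I would close by noting that the argument is robust to the choice of any fixed $\epsilon>0$, so no approximation ratio strictly below $\tfrac{7}{6}$ is achievable in polynomial time unless $\mathrm{P}=\mathrm{NP}$, which is exactly the claimed inapproximability bound.
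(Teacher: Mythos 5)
Your hardness and inapproximability arguments match the paper's proof essentially step for step: the same reduction from NAESAT with threshold $B=6$, the same appeal to Theorem~\ref{main} for the $6$-versus-$7$ gap, and the same observation that an $r$-approximate algorithm with $r=\frac{7}{6}-\epsilon$ must output a value at most $7-6\epsilon<7$ on satisfiable instances while it can never output a value below the true optimum of $7$ on unsatisfiable ones. Your extra remark about integrality forcing the output to be exactly $6$ in the satisfiable case is harmless but not needed; the paper's threshold test only requires the output to be strictly less than $7$.

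There is, however, a genuine flaw in your NP-membership argument. You propose as the certificate ``a sequence of generators whose product is a non-identity permutation of weight at most $B$,'' but such a sequence need not have polynomial length: the diameter of the Cayley graph of $\langle g_1,\dots,g_k\rangle$ over the given generators can be superpolynomial in $n$. For instance, if $H$ is generated by a single permutation $\sigma$ that is a product of disjoint cycles whose lengths are distinct primes, then $|H|$ is superpolynomial (of order $e^{\Theta(\sqrt{n\log n})}$ by Landau's function), and the only words over $\{\sigma\}$ representing $\sigma^j$ have length congruent to $j$ modulo $|H|$, so some elements admit no polynomially long product of generators. The paper sidesteps this by taking the permutation $\pi$ itself (a polynomial-size object) as the witness and verifying both its weight and its membership in the subgroup in polynomial time via the Schreier--Sims algorithm \cite{Sims70}. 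Your certificate should be repaired in the same way (or replaced by a straight-line program); as written, the verification step fails to be polynomial-time on all instances.
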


\begin{proof}
MINWSPA is in NP, since we can finish computing the weight of any permutation $\pi$ and verifying if $\pi$ is in the subgroup by Schreier-Sims algorithm \cite{Sims70} in polynomial time.
By theorem \ref{main}, $\phi$ is satisfiable if and only if the corresponding subgroup $H$ has minimum weight at most 6. Hence, we can conclude MINWSPA is NP-complete. Now, assume we have a polynomial time $(\frac{7}{6}-\epsilon)$-approximate algorithm $A$. We can construct a polynomial time algorithm to solve NAESAT.
\begin{enumerate} 
\item Construct the subgroup $H$ from $\phi$ and run $A(H)$.
\item If $A(H)$ outputs a number no more than $7-6\epsilon$, then accept $\phi$, otherwise reject.
\end{enumerate}
Any satisfiable $\phi$ will be accepted, and all unsatisfiable $\phi$'s will be rejected, since an approximate algorithm cannot give an answer less than the minimum solution which is $7$.
\end{proof}

\section{Conclusion}\label{concl}
We show that MINWSPA is NP-complete. 
It implies that the minimum weight problem of permutation codes under the well known metrics are all NP-complete.
For the case of $\ell_\infty$-metric, we also prove that there is no $\left(\frac{7}{6}-\epsilon\right)$-approximate algorithm for any $\epsilon>0$ unless P$=$NP. We believe that the minimum weight problems under other metrics also have inapproximable results, however, they still remain open. 
Our inapproximable result still has room for improvement.  It is interesting to find better approximation algorithm with some constant $c>\frac{7}{6}$.

\begin{appendix}\label{app}
\subsection{Cameron-Wu's reduction}
The reduction in Cameron and Wu's work \cite{CW10} uses only two kinds of gadgets. The variable gadget $v_i$ for the $i$-th variable is $(2i-1,2i)$. The clause gadget $h_{j,k}$ for the $k$-th literal in the $j$-th clause is defined as $s_{2n+24(j-1)}(h_k)$ where $h_1$, $h_2$, $h_3$ and $g$ are the same as in this paper. 
The generators are defined as
\[g_i=v_i\left(\prod_{(i,j,k)\in P}s_{2n+24(j-1)}(h_k)\right),\]\[g'_i=v_i\left(\prod_{(i,j,k)\in Q}s_{2n+24(j-1)}(h_k)\right),\]
where $P,Q$ are the same sets as in our reduction. They also construct a generator $g_c=\prod_{j\in[m]}s_{2n+24(j-1)}(g)$ acting as $g$ on every clause gadget. Their construction does not work in the following instance. Let $\phi=(x_1\vee x_2\vee x_2)\wedge(\bar{x}_1\vee x_2\vee x_2)$. By their construction, subgroup $G$ is generated by
\begin{tabbing}
$g_1=v_1h_{1,1}=(1,2)s_4(h_1)$,\\
$g'_1=v_1h_{2,1}=(1,2)s_{28}(h_1)$,\\
$g_2$ \=$=v_2h_{1,2}h_{1,3}h_{2,2}h_{2,3}$\\
\>$=(3,4)s_4(h_2)s_4(h_3)s_{28}(h_2)s_{28}(h_3)$\\
\>$=(3,4)s_4(h_1)s_{28}(h_1)$,\\
$g'_2=v_2=(3,4),$\\
$g_c=s_{4}(g)s_{28}(g).$
\end{tabbing}
Note that $\phi$ is an unsatisfiable formula for NAESAT.  According to their proof of Theorem 18\cite{CW10}, elements of $G$ should not have weight 5, since $\phi$ is unsatisfiable. But $g_cg_1g'_1=s_{4}(gh_1)s_{28}(gh_1)$ has weight 5. 
Therefore, we need to design the gadgets more carefully to prove that MINWSPA is NP-complete.
\end{appendix}


\begin{thebibliography}{1}

\bibitem{ABSS97}
S. Arora, L. Babai, J. Stern, Z. Sweedyk,
``The Hardness of Approximate Optima in Lattices, Codes, and Systems of Linear Equations,''
\emph{Journal of Computer and System Science}, vol. 54, 1997, pp. 317-331.

\bibitem{BMT78}
E. R. Berlekamp, R. J. McEliece, H. C.A. van Tilborg, 
``On the inherent intractibility of certain coding problems,'' 
IEEE Trans. Inform. Theory IT-24, 1978, pp. 384-–386.

\bibitem{BCW09}
C. Buchheim, P. J. Cameron, T. Wu,
``On the subgroup distance problem,''
\emph{Discrete Mathematics}, vol. 309, pp. 962--968, 2009.

\bibitem{CW10} 
P. J. Cameron, T. Wu,
``The complexity of the weight problem for permutation and matrixgroups,''
\emph{Discrete Mathematics}, vol. 310, pp. 408--416, 2010.








\bibitem{Din03}
I. Dinur, ``Approximating SVP$_\infty$ to within almost polynomial factors is NP-hard,''
\emph{Combinatorica}, vol. 23, 2003, pp. 205-–243.





\bibitem{Jiang1} 
A. Jiang, R. Mateescu, M. Schwartz, J. Bruck, 
``Rank Modulation for Flash Memories,''
in \emph{Proc. IEEE Internat. Symp. on Inform. Th.}, 2008, pp. 1731-1735.

\bibitem{Jiang2} 
A. Jiang, M. Schwartz, J. Bruck, 
``Error-Correcting Codes for Rank Modulation,''
in \emph{Proc. IEEE Internat. Symp. on Inform. Th.}, 2008, pp. 1736-1740.



\bibitem{Khot05}
S. Khot, ``Hardness of Approximating the Shortest Vector Problem in Lattices,''
\emph{J. ACM}, Vol. 52, No. 5, 2005, pp. 789--808.

\bibitem{KLTT10}
T. Klove, T.-T. Lin, S.-C. Tsai, W.-G. Tzeng, 
``Permutation arrays under the Chebyshev distance,''
in \emph{Proc. IEEE Trans. on Inform. Th.}, accepted and to appear, 2010.







\bibitem{LTT08} 
T.-T. Lin, S.-C. Tsai, W.-G. Tzeng, 
``Efficient Encoding and Decoding with Permutation Arrays,''
in \emph{Proc. IEEE Internat. Symp. on Inform. Th.}, 2008, pp. 211-214.

\bibitem{ST09}
M.-Z. Shieh, S.-C. Tsai,
``Decoding Frequency Permutation Arrays under Infinite norm,''
in \emph{Proc. IEEE Internat. Symp. on Inform. Th.}, 2009, pp. 2713--2717.

\bibitem{Shum02} 
K. W. Shum, 
"Permutation coding and MFSK modulation for frequency selective channel,''
\emph{IEEE Personal, Indoor and Mobile Radio Communications}, vol. 13, pp. 2063--2066, Sept. 2002.

\bibitem{Sims70}
C. Sims, ``Computational methods in the study of permutation groups'', \emph{Computational Problems in Abstract Algebra}, pp. 169-183, Pergamon, Oxford, 1970.






\bibitem{TS09}
I. Tamo, M. Schwartz,
``Correcting Limited-Magnitude Errors in the Rank-Modulation Scheme,''
arXiv:0907.3387v2. 




\bibitem{Var97}
A. Vardy, 
``The intractability of computing the minimum distance of a code,'' 
IEEE Trans. Inform. Theory 43, 1997, pp. 1757-1766.

\bibitem{Vinck00}
A. J. H. Vinck, 
``Coded modulation for powerline communications,''
\emph{Proc. Int. J. Electron. Commun}, vol. 54, pp. 45-49, 2000.

\bibitem{VH00}
A. J. H. Vinck, J. H\"{a}ring, 
``Coding and modulation for power-line communications,''
in \emph{Proc. Internat. Symp. on Power Line Commun.}, Limerick, Ireland, April 2000.

\bibitem{VHW00}
A. J. H. Vinck, J. H\"{a}ring, T. Wadayama, 
``Coded M-FSK for power line communications,''
in \emph{Proc. IEEE Internat. Symp. on Inform. Th.}, 2000, p.137.




\end{thebibliography}
\end{document}